\newcommand{\eps}{\varepsilon}
\newcommand{\N}{\mathbb{N}}
\newcommand{\F}{\mathbb{F}}
\newcommand{\seq}{\subseteq}
\newcommand{\poly}{\mathrm{poly}}
\newcommand{\ceil}[1]{{\lceil#1\rceil}}
\newtheorem{theorem}{Theorem}
\newtheorem{lemma}[theorem]{Lemma}
\newtheorem{claim}[theorem]{Claim}
\newtheorem{proposition}[theorem]{Proposition}
\title{Derandomization of Cell Sampling}
\author{
Alexander Golovnev\thanks{Georgetown University. Email: \texttt{alexgolovnev@gmail.com}.}
\and
Tom Gur\thanks{University of Warwick. Email: \texttt{tom.gur@warwick.ac.uk}. Tom Gur is supported by the UKRI Future Leaders Fellowship MR/S031545/1.}
\and
Igor Shinkar\thanks{Simon Fraser University. Email: \texttt{ishinkar@sfu.ca}.}
}
\date{}
\begin{document}

\maketitle
\begin{abstract}
Since 1989, the best known lower bound on static data structures was Siegel's classical cell sampling lower bound. Siegel showed an explicit problem with $n$ inputs and $m$~possible queries such that every data structure that answers queries by probing~$t$ memory cells requires space $s\geq\widetilde{\Omega}\left(n\cdot(\frac{m}{n})^{1/t}\right)$. In this work, we improve this bound for non-adaptive data structures to $s\geq\widetilde{\Omega}\left(n\cdot(\frac{m}{n})^{1/(t-1)}\right)$ for all $t \geq 2$.

For $t=2$, we give a lower bound of $s>m-o(m)$, improving on the bound $s>m/2$ recently proved by Viola over~$\F_2$ and Siegel's bound $s\geq\widetilde{\Omega}(\sqrt{mn})$ over other finite fields.
\end{abstract}

\section{Introduction}
For a finite field~$\F$, a \emph{static data structure problem} with $n$ inputs and $m$ possible queries is given by a function $f\colon\F^n\times[m]\to\F$. A non-adaptive static data structure (in the cell probe model) consists of two algorithms. The preprocessing algorithm takes an input $x\in\F^n$ and preprocesses it into $s$ memory cells $P\in\F^s$.
The query algorithm $Q$ takes an index $i\in[m]$, then non-adaptively probes at most $t$~memory cells from $P$, and has to compute $f(x,i)$.
That is, we require that $Q(P,i) = f(x,i)$ for all $i\in[m]$.
Here we assume that each input, memory cell, and query stores an element of the field~$\F$.\footnote{Typically in the cell probe model, the maintained memory is modelled by a sequence of
$s$ cells each holding a $w$-bit string. The parameter $w$ is called the \emph{word size} of the model.
In some scenarios the word size is equal to $\Theta(\log(n))$, however, $w$ is often considered a parameter (see e.g.,~\cite{Miltersen99}).
In this work, the word size corresponds to the number of bits required to represent an element of the field $\F$.
} We remark that in the cell probe model both the preprocessing and query algorithms are computationally unbounded.

Every data structure problem $f\colon\F^n\times[m]\to\F$ admits two trivial solutions:
\begin{itemize}
    \item $s=m$ and $t=1$, where in the preprocessing stage one precomputes the answers to all~$m$ queries. (This solution uses prohibitively large space.)
    \item $s=n$ and $t=n$, where one does not use preprocessing, but rather just stores the input. (This solution uses prohibitively large query time.)
\end{itemize}
A counting argument~\cite{Milt93} shows that a random data structure problem requires either almost trivial space $s\approx m$ or almost trivial query time $t\approx n$ (even in the case of adaptive data structures). The main challenge in this area is to prove a lower bound for an \emph{explicit} problem where each output can be computed in polynomial time.
The best known explicit lower bound was proven by Siegel~\cite{Siegel04} in 1989, and his technique was further developed in \cite{patrascu11structures, PTW10, Lar12}. This technique is now called cell sampling, and it will be discussed in greater detail later in this section. For an explicit problem, cell sampling gives us a lower bound of
\[
s\geq\widetilde{\Omega}\left(n\cdot\left(\frac{m}{n}\right)^{1/t}\right) \,,
\]
and this lower bounds holds even against adaptive data structures.
In particular, for $m=\poly(n)$, Siegel's result provides a problem that for linear space $s=O(n)$ requires logarithmic $t\geq\Omega(\log(n))$ query time. Alas, for super-linear space $s=n^{1+\eps}$, this best known lower bound only gives us the trivial $t\geq\Omega(1)$ bound. It is a major challenge in this area to improve on Siegel's bound.

While for the case of $t=1$, every non-trivial problem with $m$ queries requires space $s\geq m$,\footnote{For example, any problem where every pair of queries has at least $|\F|+1$ distinct pairs of values requires $s\geq m$ if $t=1$.} even the case of $t=2$ is not well understood. The cell sampling technique for $t=2$ gives a lower bound of $s\geq\widetilde{\Omega}(\sqrt{mn})$, but this is still far from the optimal bound of $s\geq \Omega(m)$ for $m=\poly(n)$. Only recently for the binary field $\F_2$, Viola~\cite{viola2019lower} proved a strong lower bound of $s\geq m/2$ on the space complexity for the case of $t=2$. Moreover, Viola~\cite{viola2019lower} showed that a better understanding of high lower bounds on the space complexity even for low values of $t$ will lead to resolving a~long-standing open problem in circuit complexity.

\paragraph{Our results.}
In this work, we further develop the cell sampling technique and improve its bound for the case of non-adaptive data structures~to
\[
s\geq\widetilde{\Omega}\left(n\cdot\left(\frac{m}{n}\right)^{1/(t-1)}\right) \,.
\]
On the one hand, this new bound does not improve asymptotic lower bounds on the query time~$t$ for any value of~$s$. On the other hand, for every fixed value of~$t$, the new bound gives an asymptotically stronger lower bound on~$s$. Furthermore, this bound essentially resolves the question for the case of $t=2$: for every field, and every number of queries $m=\poly(n)$, we give an explicit problem such that any data structure that probes $t=2$ memory cells requires memory $s\geq m-o(m)$ (see item~1 of \cref{thm:main_ds}). This improves on the bound of Viola~\cite{viola2019lower}, and answers a question asked by Rao and Natarajan Ramamoorthy~\cite{R20}.

\begin{restatable}{theorem}{mainthmds}\label{thm:main_ds}
Fix a finite field~$\F$ and a parameter $m=\poly(n)$.
\begin{enumerate}
  \item There exists an explicit problem with $n$ inputs and $m$ queries such that every non-adaptive static data structure solving it with query time~$t=2$ requires space $s \geq m - \tilde{O}(m/n)$.
  \item
    For every $t \geq 3$, there exists an explicit problem with $n$ inputs and $m$ queries such that every non-adaptive static data structure solving it with query time~$t$ requires space
    \begin{align*}
        s \geq \Omega\left( n\cdot\left(\frac{m}{n}\right)^{1/(t-1)}\cdot\frac{1}{2^t\log(n)\log(m)} \right) \,.
    \end{align*}
\end{enumerate}
\end{restatable}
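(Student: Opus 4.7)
The strategy combines an explicit construction based on a linear code with strong spreading, with a refinement of the cell sampling technique that saves a factor in the exponent by exploiting non-adaptivity.

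For the construction I would take $f(x,i)=\langle v_i,x\rangle$ where the vectors $v_1,\ldots,v_m\in\F^n$ form an explicit linear code with the spreading property that every subset of $r=\tilde O(n)$ of the $v_i$'s spans $\F^n$. Such codes are well known: the rows of a Reed--Solomon generator matrix give $r=n$ when $|\F|\ge m$, and algebraic-geometric codes give $r=n(1+o(1))$ over arbitrary finite fields, with the vectors (and hence $f$) computable in polynomial time.

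For the lower bound, given any non-adaptive $t$-probe data structure with $s$ cells, I would sample each cell independently with probability $p$ to obtain $T\seq[s]$, and call a query $i$ \emph{almost-answered} if at most one of its $t$ probes lies outside $T$. Writing $S$ for the set of almost-answered queries and $C=\bigcup_{i\in S}(P_i\setminus T)$ for the collection of distinct ``missing'' cells, every answer $(f(x,i))_{i\in S}$ is a deterministic function of the preprocessed values on $T\cup C$, so the image of $x\mapsto(f(x,i))_{i\in S}$ has size at most $|\F|^{|T|+|C|}$. The spreading property of the code forces this map to be injective whenever $|S|\ge r$, so on that event $|T|+|C|\ge n$. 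The point is that $|S|$ scales like $p^{t-1}m$ rather than $p^t m$, which is what saves the factor of $(m/n)^{1/t(t-1)}$ over Siegel.

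For part~(2), with $t\ge 3$, I would compute $\E|T|=ps$, $\E|S|\ge\Theta(tp^{t-1}m)$ and $\E|C|\le tp^{t-1}(1-p)m$, and then set $p\sim(r/(tm))^{1/(t-1)}$. Chernoff bounds then guarantee $|S|\ge r$ and $|C|\le n/2$ with probability bounded away from zero, so on that event $|T|\ge n/2$ and hence $s\ge n/(2p)=\Omega(n(m/n)^{1/(t-1)})$ up to the logarithmic factors from $r$ and the Chernoff margin, which is exactly the claimed bound.

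For part~(1), with $t=2$, the same skeleton only yields a multiplicative $s\ge\Omega(m)$. To extract the sharp $s\ge m-\tilde O(m/n)$ I would instantiate the code so that $r=n(1+o(1))$ (Reed--Solomon when $|\F|\ge m$, a near-MDS AG code otherwise), pick $p=\Theta(r/m)$, and use a second-moment computation to get $|S|\ge r$ with positive probability. The additional observation for this regime is that when $s$ is close to $m$ the query graph is nearly $2$-regular, so the bound $\E|C|\le 2p(1-p)\sum_c(1-(1-p)^{d_c})/2m$ is sharp up to a $\tilde O(m/n)$ slack; propagating this through $|T|+|C|\ge n$ and dividing by $p$ turns the $O(1)$ additive slack in $|T|+|C|\ge n$ into $\tilde O(m/n)$ additive slack in $s\ge m$.

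The main obstacle is the simultaneous concentration of $|T|$, $|S|$, and $|C|$ in part~(1): all three scale with $n$, and the concentration tolerances must themselves be of order $\tilde O(m/n)$ rather than $\Omega(m)$ in order to preserve the advertised additive error. For part~(2) this is a routine Chernoff-plus-union-bound computation, but for part~(1) the tight cancellation between $|T|$ and $|C|$ forces a delicate choice of $p$ and a code whose spreading parameter $r$ is as close to $n$ as explicit constructions allow.
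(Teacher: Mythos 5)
Your construction (an explicit code in which any $r\approx n$ of the vectors $v_i$ span $\F^n$) is fine and is essentially equivalent to the $k$-wise independent problems used in the paper. The gap is in the lower-bound argument. The gain you attribute to counting \emph{almost-answered} queries --- $|S|\approx tp^{t-1}m$ instead of $p^t m$ --- is cancelled exactly by the set $C$ of missing cells: each query in $S$ with one probe outside $T$ contributes, up to collisions, one new cell to $C$, so in the worst case $|C|$ has the same order as $|S|$. Your own estimates already show this: $\E|C|\le tp^{t-1}(1-p)m$ and $\E|S|$ equals $tp^{t-1}(1-p)m$ plus the negligible $p^tm$ term. Since the injectivity argument only fires when $|S|\ge r\approx n$, you are forced into the regime $\E|C|\approx \E|S| \gtrsim n$, and then the conclusion $|T|+|C|\ge n$ can be met by $C$ alone; it yields no constraint on $|T|=ps$ and hence no bound beyond roughly $s\ge n-|C|$, i.e.\ $s=\Omega(n)$. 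No Chernoff argument gives $|S|\ge r$ and $|C|\le n/2$ simultaneously with positive probability: if the cell degrees are all about $tm/s$ (which is consistent with $s$ around the classical bound $n(m/n)^{1/t}$ that you are trying to beat), then $\E|C|=(1-p)\sum_c\Pr[\exists\, i\ni c:\ P_i\setminus\{c\}\seq T]\approx(1-p)p^{t-1}tm\approx\E|S_1|$, so conditioned on $|S|\ge r$ one typically has $|C|\ge r-|S_0|>n/2$. Thus the sampling scheme cannot even improve on Siegel's exponent $1/t$, and the same objection blocks any additive bound $s\ge m-o(m)$ for $t=2$.

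This failure mode is precisely why the paper does not use random sampling. Its route is deterministic: \cref{thm:main_graph} finds, in any multigraph with $m\ge s(1+\eps)$ edges, a set of $O(\log(s)/\eps)$ vertices spanning more edges than vertices (via pruning low-degree vertices, contracting degree-two paths, and a BFS ``tadpole'' argument), and for $t$-hypergraphs an induction on $t$ peels a small vertex set $L$ off the hyperedges touching it, reducing to a $(t-1)$-hypergraph while paying only a factor of $s/k$ per step --- this is exactly where the exponent improves from $1/t$ to $1/(t-1)$. A set of $|S|$ cells determining $|S|+1$ queries then contradicts $k$-wise independence directly. To salvage a sampling-based proof you would need to show that the missing cells collide heavily, which is the structural information that the deterministic argument supplies and a product-measure sample does not.
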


The key step in the proof of \cref{thm:main_ds} is \cref{thm:main_graph} saying that every dense enough hypergraph contains a \emph{small} dense subgraph. 
Some of the techniques used in the proof of \cref{thm:main_graph} are well-known (for example, the proof of \cref{claim:bfs-tadpole} is similar to the standard upper bounds on the girth of a sparse graph~\cite[Chapter~IV, Theorem~1]{bollobas1998modern}). Nevertheless, we could not find results similar to \cref{thm:main_graph} in the literature (possibly due to the upper bound on the size of $S$ which may make this question less natural from the graph-theoretic point of view).

\begin{restatable}{theorem}{mainthm}\label{thm:main_graph}
Let $G = (V,E)$ be a multigraph with $|V|=s \geq 2$ vertices and $|E|=m\geq s(1+\eps)$ edges
for some $\eps=\eps(s)\in(0,1]$.
There exists a set of vertices $S \seq V$ of size $|S|\leq 8\log(s)\cdot\ceil{1/\eps}$ spanning at least $|S|+1$ edges.

Let $t \geq 3$ be an integer, and $G = (V,E)$ be a $t$-hypergraph with $|V|=s \geq 2$ vertices and $|E|=m$ hyperedges.
Let $k \in \N$ be a parameter such that $2^{t+2} \log(s)\leq k \leq s$.
If
\begin{align}\label{eq:magical-formula}
    m \geq 3s\left(\frac{2^{t+3}\cdot s \cdot \log(s)}{k}\right)^{t-2} \,,
\end{align}
then there exists a subset $S \seq V$ of size $|S| \leq k$ that spans at least $|S| + \frac{k}{2^{t+1} \log(s)}$ hyperedges.
\end{restatable}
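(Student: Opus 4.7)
The plan is to produce a small subgraph containing two cycles that share at least a vertex (a ``figure-eight''), which by construction has $|E|\geq|V|+1$. Starting from the $2$-core obtained by iteratively peeling vertices of degree $\leq 1$ (which preserves the excess $|E|-|V|\geq s\eps$ since each peeling removes at most one edge), I would perform a BFS from a vertex of degree $\geq 3$ and continue until two non-tree edges are discovered; together with the BFS tree restricted to the current ball they form the figure-eight. The depth at which the second non-tree edge appears is controlled by a counting/girth argument in the spirit of \cite[Ch.~IV,~Thm.~1]{bollobas1998modern}: the~$\geq s\eps$ non-tree edges distribute across BFS levels so that at least two of them lie within depth $O(\log(s)/\eps)$, which gives $|S|\leq 8\log(s)\lceil 1/\eps\rceil$ as required.

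\textbf{Part 2 (the hypergraph case).}
The plan is induction on~$t$, using Part~1 as the base case $t=2$. Given a $t$-hypergraph satisfying~\eqref{eq:magical-formula}, first prune vertices of hypergraph-degree below $\Delta:=tm/(4s)$: since each pruning destroys fewer than $\Delta$ hyperedges and at most $s$ prunings happen, the result has minimum degree $\geq\Delta$ and retains $\geq m/2$ hyperedges. Pick a high-degree vertex $v$ and pass to its \emph{link} $L_v$, the $(t-1)$-hypergraph on $V\setminus\{v\}$ with hyperedges $\{e\setminus\{v\}:v\in e\in E\}$; this link has at least $\Delta$ hyperedges on at most $s-1$ vertices. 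Apply the inductive hypothesis to $L_v$ with parameters $(t-1,k-1)$ to obtain $S'\seq V\setminus\{v\}$ of size $\leq k-1$ spanning many hyperedges in~$L_v$; each such hyperedge lifts to a hyperedge of~$G$ through~$v$, so $S:=S'\cup\{v\}$ has $|S|\leq k$ and at least the desired number $|S|+k/(2^{t+1}\log s)$ of contained hyperedges. The factor-of-two slack between the constants $2^{t}$ and $2^{t+1}$ in the target excess absorbs the ``$+1$'' lost when adjoining~$v$ and the constant overhead from pruning.

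\textbf{Main obstacle.}
The delicate step is the parameter calibration. Substituting~\eqref{eq:magical-formula} into $\Delta\geq tm/(4s)$, one must check that $\Delta$ exceeds the inductive-hypothesis threshold $3(s-1)\bigl(2^{t+2}(s-1)\log(s-1)/(k-1)\bigr)^{t-3}$ on $L_v$; this works precisely because the exponent $t-2$ drops to $t-3$ in the inductive call, reflecting that passing to the link consumes one dimension and one factor of $s/k$. The constants $2^{t+1},2^{t+2},2^{t+3}$ are tuned to absorb (i)~the factor-$4$ loss from pruning, (ii)~the factor-$t$ loss when converting minimum-degree to hyperedge count, and (iii)~the $\log s$ overhead that accumulates across the $t-2$ inductive levels. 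The most delicate piece is the final descent into the base case Part~1, which delivers only excess~$1$ rather than $\Omega(k/\log s)$: to reach the target excess one iteratively invokes the girth bound \cite[Ch.~IV,~Thm.~1]{bollobas1998modern} on the residual link and unions $\Theta(k/\log s)$ figure-eight subgraphs, amortizing the per-round vertex cost against the extra $\log s$ slack in the excess bound.
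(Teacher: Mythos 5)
Your Part 2 contains the decisive gap. Descending through the link of a \emph{single} high-degree vertex $v$ hands the inductive call only $\deg(v)$ hyperedges, and this is far below the density threshold that the $(t-1)$-level statement needs. Concretely, take $t=3$ and $k=\Theta(s)$ (exactly the regime the data-structure application lives in): hypothesis \eqref{eq:magical-formula} then only guarantees $m=\Theta(s\log s)$, so in a near-regular instance every vertex has degree $O(m/s)=O(\log s)$ and every link is a multigraph with $O(\log s)$ edges on up to $s$ vertices — whereas the base case must be fed a multigraph with at least $2s+\Omega(k/\log s)=\Omega(s)$ edges to produce excess $\Omega(k/\log s)$. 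The same shortfall compounds at every level: each single-vertex descent multiplies the edge count by roughly $t/s$, so sustaining $t-2$ descents would require $m\gtrsim s^{t-1}$, i.e., no improvement over the trivial bound. Your calibration check misses this because you compare $\Delta$ only to the $(\cdot)^{t-3}$ threshold with $s/k$ treated as large; for $k$ near $s$ that factor is $O(1)$ and the inequality fails. The fix — and the actual content of the theorem — is to descend through a \emph{set} $L$ of $\ell=\Theta(k/(2^{t}\log s))$ highest-degree vertices simultaneously: at least $\ell m/s$ hyperedges touch $L$, each such hyperedge is shortened by deleting one of its vertices in $L$, the inductive call is made with size budget $k-\ell$, and adjoining all of $L$ at the end costs only $\ell$ in excess, which the gap between $2^{t}$ and $2^{t+1}$ absorbs. (Your excess bookkeeping for a single vertex is actually fine; it is only the edge count that is not.) A smaller slip: removing up to $s$ vertices of degree $<tm/(4s)$ can destroy $tm/4\geq m$ hyperedges once $t\geq4$, so the "retains $m/2$" claim needs $\Delta\leq m/(2s)$.

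Part 1 follows the paper's route (reduce to a core, BFS, collect two independent cycles), but the step you delegate to a "counting/girth argument" is where all the work is, and the mechanism you name is not a proof: the $\geq s\eps$ non-tree edges need not distribute across BFS levels of the 2-core in any exploitable way, since BFS balls in a minimum-degree-2 multigraph can grow only linearly. The argument that actually yields $8\log(s)\cdot\ceil{1/\eps}$ first \emph{deletes} degree-two paths of length $>1/\eps$ (this preserves the average degree $2(1+\eps)$) and \emph{contracts} the remaining short degree-two paths to reach minimum degree three, where BFS balls grow exponentially and two short cycles appear within radius $O(\log s)$; re-expanding the contracted paths then multiplies the size by at most $\ceil{1/\eps}$. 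Finally, you correctly note that the $t=2$ base case needs excess $\Theta(k/\log s)$ rather than $1$, but the amortization you gesture at must be made precise (contract the set found so far into a single vertex, check the residual graph still has $m'\geq 2s'+1$ edges, and reapply the excess-one lemma); as written this remains a sketch.
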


In the statement of \cref{thm:main_graph} a \emph{multigraph} is a graph that may contain parallel edges and parallel self-loops, and a \emph{$t$-hypergraph} is a hypergraph where every edge has at most~$t$ vertices.
If all the vertices of an edge $e$ belong to a set of vertices $S$, then we say that $S$ spans the edge~$e$.

\paragraph{Comparison to the cell sampling bound.}
The classical cell sampling technique restricted to the case of non-adaptive data structures can be viewed as a slightly weaker version of \cref{thm:main_graph}. In the cell sampling argument, one picks $k$ \emph{random} vertices and proves that with non-zero probability they span at least $k+1$ hyperedges. This way, each $t$-hyperedge is spanned with probability $\approx\left(\frac{k}{s}\right)^t$, and the expected number of spanned edges is $\approx m\cdot\left(\frac{k}{s}\right)^t$. This leads to the lower bound of $m\gtrsim\Omega\left(s(\frac{s}{k})^{t-1}\right)$, which is weaker than the bound of $m\gtrsim\Omega(s\left(\frac{s}{k}\right)^{t-2})$ from \cref{thm:main_graph}. The contribution of this work is a deterministic way to choose~$k$ vertices that span $k+1$ hyperedges which improves on the aforementioned bound obtained by randomly sampling $k$ vertices.\footnote{We remark that the work~\cite{PTW10} also uses a deterministic process similar to cell sampling to prove the standard cell sampling lower bound even against adaptive data structures. In this work, we give a different deterministic process for a stronger bound against (weaker) non-adaptive data structures.}

The following proposition shows that the bound of \cref{thm:main_graph} is essentially tight, which poses a barrier on further improvements using this technique.
\begin{proposition}
Let $t\geq3$, and $G = (V,E)$ be a $t$-uniform hypergraph with $|V| = s$ vertices and $|E|=m$ edges sampled uniformly at random.
    Then with positive probability $G$ does not have a set of $k$ vertices spanning at least $k$ hyperedges for every $k\geq t$ satisfying $s \geq e^3 \cdot k\cdot \left(\frac{m}{k}\right)^{1/(t-1)}$.
\end{proposition}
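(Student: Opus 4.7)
The plan is to prove this by a first--moment/union--bound argument: draw the $m$ hyperedges independently and uniformly from $\binom{V}{t}$ (the argument also works for sampling without replacement), and show that the expected number of ``bad'' pairs $(k,S)$---where $k$ is in the valid range and $S$ is a $k$-subset spanning at least $k$ hyperedges---is strictly less than $1$.

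First I would fix a valid $k$ and a set $S \seq V$ with $|S|=k$. Each edge $e$, chosen uniformly from $\binom{V}{t}$, is contained in $S$ with probability $p = \binom{k}{t}/\binom{s}{t} \leq (k/s)^t$. Since the edges are independent, the number of edges spanned by $S$ is $\mathrm{Bin}(m,p)$, so
\[
\Pr[S \text{ spans } \geq k \text{ edges}] \;\leq\; \binom{m}{k} p^k \;\leq\; \binom{m}{k}\left(\frac{k}{s}\right)^{tk}.
\]
Next, union bounding over the $\binom{s}{k}$ choices of $S$ and using $\binom{s}{k}\leq (es/k)^k$, $\binom{m}{k}\leq(em/k)^k$, the probability that some $k$-subset is bad is at most
\[
\left(\frac{es}{k}\right)^{k}\!\left(\frac{em}{k}\right)^{k}\!\left(\frac{k}{s}\right)^{tk}
\;=\;\left(\frac{e^{2}\,m\,k^{t-2}}{s^{t-1}}\right)^{k}.
\]

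The key step is to plug in the hypothesis $s \geq e^{3}\,k\,(m/k)^{1/(t-1)}$, which raised to the $(t-1)$-th power gives $s^{t-1} \geq e^{3(t-1)}\,k^{t-2}\,m$. This yields
\[
\frac{e^{2}\,m\,k^{t-2}}{s^{t-1}} \;\leq\; e^{2-3(t-1)} \;=\; e^{5-3t}\,,
\]
which for $t \geq 3$ is at most $e^{-4}$. Finally, summing the union bound over all $k$ with $t \leq k \leq s$ satisfying the hypothesis,
\[
\sum_{k \geq t}\left(e^{5-3t}\right)^{k} \;\leq\; \sum_{k \geq 3} e^{-4k} \;=\; \frac{e^{-12}}{1-e^{-4}} \;<\; 1,
\]
so with positive probability no such bad $(k,S)$ exists, completing the proof.

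I do not foresee a serious obstacle: the only point that requires mild care is checking that $\binom{k}{t}/\binom{s}{t} \leq (k/s)^{t}$ (which follows since each factor $(k-i)/(s-i) \leq k/s$ for $0 \leq i < t$), and verifying the exponent arithmetic so that the base $e^{5-3t}$ is small enough to make the geometric tail sum strictly less than one. The argument is a mild sharpening of the standard cell--sampling heuristic, aligned so that the exponent in the condition matches the $(m/k)^{1/(t-1)}$ rate that \cref{thm:main_graph} achieves, thereby showing that result is essentially tight.
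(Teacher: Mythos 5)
Your proposal is correct and follows essentially the same first-moment argument as the paper: a union bound over $k$-subsets of vertices and $k$-subsets of edges, with the standard binomial estimates. The only differences are cosmetic refinements in your favor—you use the sharper bound $\binom{k}{t}/\binom{s}{t}\leq(k/s)^t$ (the paper uses $(ke/s)^t$, getting base $e^{5-2t}$ instead of your $e^{5-3t}$), and you explicitly sum the union bound over all valid $k$, which the paper leaves implicit.
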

\begin{proof}
    By the union bound over all $k$-subsets of vertices, and all $k$-subsets of edges, we have that
    the probability that $G$ has $k$ vertices spanning $\geq k$ hyperedges is at most
    \begin{equation*}
      \binom{s}{k} \cdot \binom{m}{k} \cdot \left(\frac{\binom{k}{t}}{\binom{s}{t}} \right)^k
      \,.
    \end{equation*}
    Using the inequalities $\left( \frac{a}{b}\right)^b \leq \binom{a}{b} \leq \left( \frac{a e}{b}\right)^b$,
    we have that this probability is bounded from above by
    \begin{equation*}
        \left( \frac{se}{k}\right)^k
        \cdot
        \left( \frac{me}{k}\right)^k
        \cdot
        \left(\frac{ke}{s}\right)^{tk}
        =
        \left( \frac{e^{t+2} m k^{t-2}}{s^{t-1}}\right)^k
        \leq \left( \frac{1}{e^{2t-5}}\right)^k
        \leq e^{-k}<1
      \,.
    \end{equation*}
\end{proof}
\section{Preliminaries}
All logarithms in this paper are to the base two. By $P_k$ and $C_k$ we denote the path and the cycle graphs on $k$ vertices, respectively. The length of a path is the number of edges in it.
By multigraphs we mean graphs that may contain parallel edges and (possibly parallel) self-loops. The degree of a vertex is the number of incident edges, and a self-loop adds two to the degree. For a multigraph $G=(V, E)$ and a subset of its vertices $S\subseteq V$, $G[S]$ denotes the subgraph of $G$ induced on the vertices $S$.

By \emph{$t$-hypergraphs} we mean hypergraphs where each edge contains \emph{at most} $t$ distinct vertices, and where parallel edges are allowed. A \emph{$t$-uniform hypergraph} is a $t$-hypergraph where each edge contains exactly $t$ vertices. We say that a set of vertices $S$ spans a hyperedge~$e$, if all the vertices of $e$ belong to $S$.

We will need the following auxiliary claim that shows that for every vertex~$v$ of a cubic graph, there is a small tadpole graph starting at~$v$.
\begin{claim}\label{claim:bfs-tadpole}
        Let $G = (V,E)$ be a multigraph with $|V|=s\geq 2$ vertices, and let $v \in V$.
        Suppose that $\deg(v) \geq 1$ and $\deg(u )\geq 3$ for all $u \in V \setminus \{v\}$.
        Then $G$ contains a path $(p_1,p_2,\dots,p_k)$ and a cycle $(c_1,c_2,\dots,c_\ell)$ such that $p_1 = v$, $p_k = c_1$, $k \geq 1, \ell \geq 1$, and $k+\ell \leq 4\log(s)$.
\end{claim}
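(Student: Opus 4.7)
The plan is a direct breadth-first search (BFS) from $v$: the degree condition will force each BFS layer to at least double, so within logarithmically many levels the tree must collide with itself and produce the required cycle.

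I would first dispose of two degenerate configurations at the root. If $v$ has a self-loop, the pair $(k,\ell) = (1,1)$ immediately works; if $v$ has two parallel edges to some neighbor, $(k,\ell) = (1,2)$ works with the cycle consisting of the two parallel edges. In the remaining case every edge at $v$ goes to a distinct neighbor, so the BFS layer $L_1$ (vertices at distance exactly $1$ from $v$) is non-empty.

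Running BFS from $v$, I would argue by induction that as long as no non-tree edge has been discovered yet, $|L_i| \geq 2^{i-1}$ for each $i \geq 1$. The inductive step uses the hypothesis that every $u \in L_{i-1}$ with $i \geq 2$ is distinct from $v$, hence $\deg(u)\geq 3$; one of its edges is the tree edge to the parent in $L_{i-2}$, while each of the remaining $\geq 2$ edges is forced to land on a previously unseen vertex in $L_i$ --- every other possibility (a parallel parent edge, a self-loop, a sibling, an ancestor, or a vertex already claimed by another $u' \in L_{i-1}$) would itself be a non-tree edge, contradicting the assumption. Summing $|L_0| + \cdots + |L_d| \geq 2^d$ then forces BFS to uncover its first non-tree edge at some depth $d \leq \log(s)$.

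Let $e=(a,b)$ be that first non-tree edge, with $a,b$ at BFS depths $d_1, d_2 \leq d+1$ (the $+1$ allowing for the possibility that $b$ has just been placed in $L_{d+1}$ by a previously processed sibling of $a$), and let $w$ be their least common ancestor in the BFS tree at depth $d_w$. The tree path from $v$ to $w$ (a path on $k = d_w+1$ vertices) followed by the cycle obtained by concatenating the tree paths $w \to a$ and $b \to w$ with the edge $e$ (a cycle on $\ell = d_1 + d_2 - 2d_w + 1 \geq 1$ vertices, the value $\ell = 1$ corresponding to a self-loop at $w=a=b$) is the desired tadpole, and
\begin{equation*}
k + \ell \;=\; d_1 + d_2 - d_w + 2 \;\leq\; 2d + 3 \;\leq\; 2\log(s) + 3 \;\leq\; 4\log(s),
\end{equation*}
where the last inequality is valid for $s \geq 3$; the case $s=2$ is already subsumed by the degenerate configurations above. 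The main obstacle to watch for is keeping the bookkeeping of multigraph features straight --- self-loops on non-root vertices and parallel edges anywhere in the BFS tree must all be treated uniformly as non-tree edges --- so that the "each $L_{i-1}$ vertex contributes two distinct fresh children" dichotomy holds cleanly; once this convention is in place, both the exponential growth bound and the tadpole extraction are routine.
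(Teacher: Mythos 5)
Your proof is correct and takes essentially the same route as the paper's: BFS from $v$, the degree-$3$ condition doubles each layer so a non-tree edge must appear within $\log(s)$ levels, and the tadpole is read off from the tree paths to that edge's endpoints. The one slip is the claim that $s=2$ is subsumed by your two degenerate configurations --- it is not (take $v$ joined to $u$ by a single edge, with $\deg(u)\geq 3$ achieved via self-loops at $u$), though your main argument still yields $k+\ell=3\leq 4\log(2)$ there; the paper sidesteps this by disposing of all $s\leq 3$ by hand and running the BFS argument only for $s\geq 4$, where $2\log(s)+4\leq 4\log(s)$ holds.
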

In \cref{claim:bfs-tadpole} a cycle of length 1 means a self-loop, and a cycle of length 2 corresponds to a pair of parallel edges between two vertices.
In particular, if $v$ has a self-loop, then we can take the path to be $(v)$ and the cycle to be $(v)$.
\begin{proof}
Recall that a self-loop and a pair of parallel edges are cycles. For $s\leq 3$, a graph where all but one vertex have degree at least three contains a path from~$v$ of length $k\leq2$ to a cycle of length $\ell\leq2$. Therefore, $k+\ell\leq4\leq4\log(s)$, and the statement of the claim follows. Hence, in the following we assume that $s\geq 4$.

In order to prove the claim, we run the Breadth First Search (BFS) algorithm starting at the vertex~$v$. Let us say that $v$ is at the first level of the BFS tree, and let $t$ be the smallest integer such that the first $t$ levels of the BFS tree contain a cycle (note that a self-loop or a pair of parallel edges count as cycles). Such $t$ exists since $G$ has at most one vertex of degree 1.

Since all vertices in $V \setminus \{v\}$ have degrees at least three,
the number of vertices in the first $t-1$ levels of the BFS tree is at least $1+1+2+4+8+\dots+2^{t-3}=2^{t-2}$.
On the other hand, the total number of vertices is $|V|=s$. This implies that $s > 2^{t-2}$, and hence $t < \log(s)+2$.

Let $(c_1,c_2,\dots,c_\ell)$ be the obtained cycle.
Suppose without loss of generality that the level of $c_1$ in the BFS tree is the minimal among the $c_i$'s levels,
and let $(v=p_1,p_2,\dots,p_k = c_1)$ be the path from $v$ to $c_1$ in the BFS tree.
Observing that the length of the cycle from level $k$ to level $t$ has at most $\ell\leq 2(t-k)+1$ edges (where one edge may connect two vertices in the same level), we have that
$k+\ell \leq k + \ell + (k-1) = 2k+\ell-1 \leq 2t<2\log(s)+4\leq 4\log(s)$, as required.
\end{proof}

\section{Dense subgraphs in hypergraphs}
We are now ready to prove the first part of \cref{thm:main_graph}.
\begin{lemma}\label{lem:S+1-edges}
For every $\eps>0$ and every multigraph $G = (V,E)$ with $|V|=s \geq 2$ vertices and $|E|=m\geq s(1+\eps)$ edges, there exists a set of $k\leq 8\log(s)\cdot\ceil{1/\eps}$ vertices spanning at least $k+1$ edges.
\end{lemma}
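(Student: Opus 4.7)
My plan is to find $S \subseteq V$ with $|S| \le B := 8\log(s)\lceil 1/\eps\rceil$ and $|E(S)| \ge |S|+1$ by iteratively reducing $G$ and then, if needed, applying \cref{claim:bfs-tadpole}. If $s \le B$, I take $S = V$: since $m \ge s(1+\eps) > s$ and $m$ is an integer, we have $m \ge s+1$, so $V$ has the required excess. Otherwise I iteratively remove vertices of degree at most $2$ from the current graph. Removing a degree-$\le 1$ vertex preserves the excess $|E|-|V|$, and removing a degree-$2$ vertex decreases the excess by exactly $1$. The process terminates either at a \emph{$3$-core} $G^*$ (minimum degree $\ge 3$) or at the empty graph.

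If $G^*$ is non-empty, the min-degree condition gives $|E(G^*)| \ge \tfrac{3}{2}|V(G^*)|$, so $G^*$ itself has excess $\ge \tfrac{1}{2}|V(G^*)| \ge 1$. If $|V(G^*)| \le B$, I output $S = V(G^*)$. If $|V(G^*)| > B$, I apply \cref{claim:bfs-tadpole} inside $G^*$ to get a tadpole on $\le 4\log(s)$ vertices (with an equal number of edges, hence excess $0$). I then extend the tadpole: since $G^*$ has min degree $\ge 3$, running BFS from the tadpole's vertex set has exponentially growing levels, so within $O(\log s)$ further vertices I must find either a chord of the tadpole or a back-edge, producing a subgraph of size $\le 8\log(s) \le B$ with excess $\ge 1$.

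If the reduction depletes the graph (so $m \le 2s$ and $\eps \le 1$), I argue that the trajectory $(s_i, e_i)$ of the reduction must pass through a state with $s_i \le B$ and $e_i \ge 1$. The initial excess $e_0 \ge s\eps$ together with the fact that $e_i$ drops by at most $1$ per step means that, by choosing the removal order to prioritise degree-$\le 1$ vertices and exploit the ``cleanup cascades'' triggered after each degree-$2$ removal, the ratio of vertex-count drop to excess drop is at least $\lceil 1/\eps \rceil$ on average---exactly the slack factor built into $B$. This brings the trajectory into the region $\{s_i \le B,\ e_i \ge 1\}$ before the excess falls below $1$.

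The main obstacle I anticipate is the BFS extension in the third paragraph: turning the intuitive exponential growth of BFS in a min-degree-$3$ graph into a sharp upper bound of $\le 4\log(s)$ additional vertices and handling the attachment of the BFS tree to the existing tadpole. A secondary obstacle is the trajectory analysis in the final case, where I must verify that a removal order realising the claimed cascade behaviour always exists; I expect this to reduce to a structural property of $2$-degenerate graphs with $m \ge s(1+\eps)$.
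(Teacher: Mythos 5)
Your 3-core branch is essentially sound and, modulo details, recreates the paper's argument: the ``BFS extension'' you defer is exactly what the paper does by contracting the first cycle into a single vertex and applying \cref{claim:bfs-tadpole} a second time, yielding two cycles joined by a path on $O(\log s)$ vertices and hence excess $1$. The genuine gap is in your final case, where the peeling depletes the graph. Your key quantitative claim --- that some removal order achieves a vertex-drop to excess-drop ratio of at least $\lceil 1/\eps\rceil$ over the first $s-B$ removals --- is not proved, and the averaging heuristic behind it does not hold. Note that over the \emph{entire} peeling the ratio is exactly $s/e_0\approx 1/\eps$ by conservation (total vertices removed $=s$, net excess burned $=e_0$), so you have no slack: the argument fails whenever the excess burn is front-loaded, and the late-stage isolated-vertex removals that refund excess arrive only after the excess has already crossed zero. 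A concrete bad instance: take a $3$-regular graph $H$ on $h$ vertices and subdivide every edge once, giving $s=\tfrac{5h}{2}$, $m=3h$, $\eps=\tfrac15$, $B=40\log s$. Here every degree-$2$ vertex has both neighbours of degree $3$, so each degree-$2$ removal costs one unit of excess for one vertex, and the ensuing cascades free only about two further vertices per unit --- a rate of roughly $3$, not the required $5$. Under such an order the excess reaches $0$ while $\Omega(h)\gg B$ vertices remain, dips negative, and only returns to $0$ at the very end; the trajectory never enters $\{s_i\le B,\ e_i\ge 1\}$. A good order does exist for this graph, but exhibiting one amounts to first locating a small subgraph with two independent cycles and peeling toward it --- which is the lemma itself, so the argument is circular as stated.

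The idea you are missing is the paper's treatment of degree-$2$ vertices: it never deletes short degree-$2$ paths. It deletes a degree-$2$ path only when its length is at least $1/\eps+1$ (so that $\ell$ edges and $\ell-1\ge 1/\eps$ vertices go together, preserving $m'/s'\ge 1+\eps$), and it \emph{contracts} every shorter maximal degree-$2$ path into a single edge. This guarantees a nonempty minimum-degree-$3$ graph $G_1$ in which the tadpole argument applies, and each contracted edge re-expands to at most $\lceil 1/\eps-1\rceil$ vertices, which is exactly where the factor $\lceil 1/\eps\rceil$ in the bound $8\log(s)\cdot\lceil 1/\eps\rceil$ comes from. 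In your once-subdivided cubic example this contraction simply recovers $H$, where a small bicyclic subgraph is found directly. To repair your proof, replace the blanket degree-$\le 2$ peeling and the trajectory analysis with this delete-long/contract-short dichotomy.
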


\begin{proof}
    We repeatedly apply the following operations to the graph~$G$ as long as at least one of them is applicable.
    \begin{itemize}
        \item If $G$ contains an isolated vertex, then we remove this vertex.
        \item If $G$ contains a vertex of degree one, then we remove this vertex and the incident edge from the graph. In this case, we remove one edge and one vertex.
        \item If $G$ contains a vertex whose only incident edge is its self-loop, then we remove this vertex with the self-loop. Again, we remove one edge and one vertex.
        \item If $G$ contains a path of length $\ell\geq1/\eps+1$ consisting of vertices of degree two, then we remove all vertices of degree two (i.e., internal vertices) belonging to this path with all the incident edges. In this case, we remove $\ell-1\geq1/\eps$ vertices and $\ell$ edges.
    \end{itemize}
    Note that these four operations do not decrease the average degree of the graph, as the resulting graph has $s'$ vertices and $m'$ edges such that $m'/s'\geq m/s\geq(1+\eps)$. Each of the remaining vertices of degree two in the resulting graph belongs to a path of degree-two vertices of length less than $\ceil{1/\eps}$. We contract each such maximal path into an edge, and obtain a graph $G_1=(V_1,E_1)$ of minimum degree three. (Note that such contraction may create a self-loop, in case that the endpoints of the path are the same vertex.) We will used the following observation: since the length of each contracted path is at most $\ceil{1/\eps}$, when we expand a contracted edge back into a path, we add at most $\ceil{1/\eps-1}$ vertices to the graph.

    We apply \cref{claim:bfs-tadpole} to the graph $G_1$ and an arbitrary vertex $v_1 \in V_1$, and get a cycle $C^{(1)}_{\ell}$ in $G_1$ of length $\ell \leq 4\log(s)$. (Here, we ignore the path guaranteed by \cref{claim:bfs-tadpole}.)
    Next, we consider the following two cases.
    \begin{itemize}
      \item
        If $G_1[C^{(1)}_\ell]$ is a connected component in $G_1$, then since each vertex of~$G_1$ has degree at least~3,
        the vertices of $C^{(1)}_\ell$ must span at least $\ceil{1.5 \ell} \geq \ell+1$ edges. Let $S$ be the vertices of the subgraph $G_1[C^{(1)}_\ell]$, together with the vertices obtained by expanding $\ell+1$ of the contracted edges back into the vertices and edges of~$G$. Since each expanded edge adds the same number of vertices and edges, we have that $S$ spans at least $|S|+1$ edges. Since each of the  $\ell+1$ expanded edges introduces at most $\ceil{1/\eps-1}$ new vertices, we have that $|S|\leq \ell+ (\ell+1)\cdot\ceil{1/\eps-1} \leq 2\ell\cdot\ceil{1/\eps}\leq8\log(s)\cdot\ceil{1/\eps}$. Thus, the constructed set $S$ satisfies the required property.
      \item
        Otherwise, we contract $C^{(1)}_\ell$ into a new vertex $v_2$, and denote the obtained graph by $G_2$.
        Since $G_1[C^{(1)}_\ell]$ is not a connected component in $G_1$, it follows that $v_2$ is not an isolated vertex in $G_2$,
        and, hence, $\deg(v_2) \geq 1$.
        We now apply \cref{claim:bfs-tadpole} to $G_2$ and the vertex $v_2$, and get a path $P^{(2)}_{k'}$ and a cycle $C^{(2)}_{\ell'}$ in $G_2$ with $k'+\ell' \leq 4\log(s)$
        such that $v_2 \in P^{(2)}_{k'}$.

        Recall that the vertex $v_2$ in $G_2$ corresponds to $C^{(1)}_\ell$ in $G_1$. Thus, the set of vertices $S' = C^{(1)}_\ell \cup P^{(2)}_{k'} \cup C^{(2)}_{\ell'}$ forms two cycles connected by a path of length $k'-1$ in~$G_1$. Then $S'$ has $\ell + (k'-2) + \ell' \leq \ell+4\log(s)-1\leq 8\log(s)-1$ vertices, and spans at least $|S'| + 1$ edges in $G_1$. By expanding $|S'|+1$ contracted edges, we again have a set $S$ of vertices of~$G$ spanning at least $|S|+1$ edges. It remains to note that the number of vertices in~$S$ is $|S|\leq |S'|+(|S'|+1)\cdot\ceil{1/\eps-1}\leq (|S'|+1)\cdot\ceil{1/\eps}\leq8\log(s)\cdot\ceil{1/\eps}$.
    \end{itemize}
    This completes the proof of \cref{lem:S+1-edges}.
\end{proof}

\noindent The following lemma generalizes \cref{lem:S+1-edges} by finding a small induced subgraph with a large gap between the number of vertices and the number of edges.
\begin{lemma}\label{lem:gap-lemma}
Let $G = (V,E)$ be a multigraph with $|V| = s$ and $|E|=m$.
For every $g\geq 1$ satisfying $m \geq 2s+g+1$, there is a subset of vertices $S \seq V$ of size at most $|S| \leq 8g\log(s)$ that spans at least $|S|+g$ edges.
\end{lemma}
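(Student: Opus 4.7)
The plan is to iterate \cref{lem:S+1-edges} a total of $g$ times, growing a set $S$ so that it gains one unit of excess per step. Concretely, I would maintain the invariant that after iteration $i$ there is a set $S_i \seq V$ with $|E(G[S_i])| \geq |S_i| + i$, starting from $S_0 = \emptyset$. Once $i = g$, we return $S_g$, and if at some point we accidentally get $|E(G[S_i])| \geq |S_i| + g$ earlier, we stop then instead.

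To perform iteration $i+1$, I would contract all vertices of $S_i$ in $G$ to a single vertex $v^{\star}$ and then delete every self-loop at $v^{\star}$, obtaining a multigraph $H'$ with $|V(H')| = s - |S_i| + 1$ and $|E(H')| = m - |E(G[S_i])|$. Under the assumption that we are not yet done, namely $|E(G[S_i])| \leq |S_i| + g - 1$, substituting $m \geq 2s + g + 1$ gives $|E(H')| \geq 2s - |S_i| + 2 \geq 2\cdot |V(H')|$, which is exactly the density condition required to apply \cref{lem:S+1-edges} with $\eps = 1$. This yields a set $T \seq V(H')$ with $|T| \leq 8\log(s)$ spanning at least $|T|+1$ edges of $H'$.

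I would then set $S_{i+1} := S_i \cup (T \setminus \{v^{\star}\})$. The key point is that edges of $H'$ spanned by $T$ are in bijection with the edges of $G[S_{i+1}]$ that do not lie in $G[S_i]$, precisely because the self-loops we removed at $v^{\star}$ correspond to the edges of $G[S_i]$. Hence $|E(G[S_{i+1}])| \geq |E(G[S_i])| + |T| + 1$, while $|S_{i+1}| \leq |S_i| + |T|$ (with equality unless $v^{\star} \in T$, in which case the size grows by one less). Combining, $|E(G[S_{i+1}])| - |S_{i+1}| \geq (|E(G[S_i])| - |S_i|) + 1$, which preserves the invariant. After at most $g$ iterations we have $|S| \leq 8g\log(s)$ and $|E(G[S])| \geq |S| + g$, as required.

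The main obstacle is the density bookkeeping for $H'$: one has to check that contracting $S_i$ and stripping out the self-loops that encode the already-accounted-for edges does not destroy the hypothesis of \cref{lem:S+1-edges}. The slack between $m \geq 2s+g+1$ and the negated invariant $|E(G[S_i])| \leq |S_i| + g - 1$ is exactly what makes this calculation go through, and it is what dictates the $+g+1$ rather than $+g$ in the hypothesis of the lemma.
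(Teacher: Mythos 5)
Your proposal is correct and matches the paper's argument: the paper runs the same construction as an induction on $g$, contracting the current set $S_{g-1}$ into a single vertex, discarding its internal edges, verifying $m' \geq 2s'+1$ via the same slack from $m \geq 2s+g+1$, and applying \cref{lem:S+1-edges} with $\eps=1$ to gain one unit of excess. Your iterative phrasing and the paper's inductive one are the same proof.
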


\begin{proof}
    The proof is by induction on $g \geq 1$.
    The base case of $g=1$ follows from \cref{lem:S+1-edges} with $\eps = 1$.
    Next we assume that the lemma is true for $g-1$, and prove it for $g \geq 2$.
    By the induction hypothesis there is a subset of vertices $S_{g-1} \seq V$ of size at most $|S| \leq 8(g-1)\log(s)$ such that $S_{g-1}$ spans at least $|S_{g-1}|+g-1$ edges.

    If $S_{g-1}$ spans $\geq |S_{g-1}|+g$ edges, then we are done. Otherwise, $S_{g-1}$ must span exactly $|S|+g-1$ edges.
        Consider a graph $G' =(V', E')$ obtained from $G$ by contracting $S_{g-1}$ into a new vertex $v^S$ and removing the edges with both ends in $S_{g-1}$.
        Then $G'$ has $s'=|V'| = s - |S| + 1$ vertices and $m'=|E'|= m-(|S|+g-1)$ edges.
        In particular, $m' \geq (2s + g + 1) - (|S|+g-1)  \geq 2s - 2|S| + 3= 2 s' + 1$.
        Therefore, by \cref{lem:S+1-edges},
        $G'$ has a subset of vertices $S' \seq V'$ of size at most $|S'| \leq 8\log(s)$ such that $S'$ spans at least $|S'|+1$ edges. We remark that $S'$ may or may not contain the vertex~$v^S$.

        By taking $S = S_{g-1} \cup (S'\setminus\{v^S\})$ we obtain a set of $|S|\leq8g\log(s)$ vertices spanning at least $|S|+g$ edges.
\end{proof}
\noindent We now finish the proof of \cref{thm:main_graph}.
\mainthm*
\begin{proof}
    The first part of the theorem is proven in \cref{lem:S+1-edges}. For the second part of the theorem, without loss of generality, we assume that $G$ is $t$-uniform. Indeed, if an edge of $G$ has fewer than $t$ vertices, then we extend this edge with arbitrary vertices, and the theorem statement for the new graph will imply the statement for~$G$.

    The proof of the second part of the theorem is by induction on $t \geq 2$.
    For the base case of $t=2$ the statement follows immediately from \cref{lem:gap-lemma}.
    Indeed, for $t=2$ the bound \eqref{eq:magical-formula} implies that $m \geq 3s$,
    and by \cref{lem:gap-lemma} with $g = \frac{k}{8 \log(s)}$ we get the desired conclusion.

    For the induction step, let us prove the statement of the theorem for $t$, assuming that it holds for $t-1$.
    Let $\ell$ be an integer such that $\frac{k}{2^{t+3} \log(s)} \leq \ell \leq \frac{k}{2^{t+2}\log(s)}$. Note that since $t\geq 3$ and $s\geq2$, we have that $\ell\leq k/4$. First we show that 
    there exists a subset $L \seq V$ of $\ell$ vertices such that the number of hyperedges touching them is
    $E_L = |\{e \in E : e \cap L \neq \emptyset\}| \geq \frac{\ell m}{s}$.
    Indeed, since $\sum_{v \in V} \deg(v) \geq tm$,
    there must exist a set $L \seq V$ such that $\sum_{v \in L} \deg(v) \geq \frac{\ell tm}{s}$.
    And since each hyperedge is counted in the sum at most $t$ times, it follows that the number of edges adjacent
    to $L$ is at least $\frac{\ell m}{s}$.

    Associate each hyperedge $e \in E_L$ with some vertex $v_e \in e \cap L$.
    That is, if $e$ contains a unique vertex $v_e$ in $e \cap L$, then we associate $e$ with this $v_e$,
    and if there is more than one such vertex, then we choose $v_e \in e \cap L$ arbitrarily.

    Define the graph $G^* = (V,E^*)$, where $E^* = \{e \setminus \{v_e\} : e \in E_L\}$.
    Note that $G^*$ has $s$ vertices, and by the assumption on~$m$ in~\eqref{eq:magical-formula}, the number of hyperedges (of size at most $t-1$) is at least
    \begin{align*}
        \frac{\ell m}{s}
        & \geq 3s\left(\frac{2^{t+3}\cdot s \cdot \log(s)}{k}\right)^{t-2} \cdot \frac{\ell}{s} \\
        & \geq 3s\left(\frac{2^{t+3}\cdot s \cdot \log(s)}{k}\right)^{t-2} \cdot \frac{k}{2^{t+3}\cdot s\cdot\log(s)} \\
        & = 3s\left(\frac{2^{t+2}\cdot s \cdot \log(s)}{k/2}\right)^{t-3}\\
        & \geq 3s\left(\frac{2^{t+2}\cdot s \cdot \log(s)}{k-\ell}\right)^{t-3}
        \enspace,
    \end{align*}
    where the last inequality uses $k-\ell \geq k/2$.
    Therefore, we can apply the induction hypothesis to the $(t-1)$-hypergraph $G^*$ with $k-\ell$ being the bound on the size of the guaranteed set.
    We get that $G^*$ has a subset $S^* \seq V$ of size $|S^*| \leq (k-\ell)$
    that spans at least $|S^*| + \frac{k-\ell}{2^{t}\log(s)}$ hyperedges.
    Define the set $S = S^* \cup L$.
    Therefore, $|S| \leq |S^*| + |L|  \leq (k-\ell) + \ell \leq k$,
    and since the number of hyperedges spanned by $S$ in $G$
    is at least the number of hyperedges spanned by $S^*$ in $G^*$,
    it follows that $S$ spans at least
    \begin{equation*}
        |S^*| + \frac{k-\ell}{2^{t}\log(s)}
        \geq |S| - \ell + \frac{k-\ell}{2^{t}\log(s)}
        \geq |S| - \frac{k}{2^{t+2}\log(s)} + \frac{3k}{2^{t+2}\log(s)}
        \geq |S| + \frac{k}{2^{t+1}\log(s)}
    \end{equation*}
    edges, as required.
\end{proof}

\section{Data structure lower bound}
We are now ready to prove the main theorem of this paper using \cref{thm:main_graph}. We will prove our data structure lower bound for $k$-wise independent functions. A function $f\colon\F^n\to\F^m$ is called $k$-wise independent if for every $k$-tuple $S$ of outputs, the uniform distribution of the $n$ inputs induces the uniform distribution on $S$. 

One way to construct $k$-wise independent functions utilizes linear error correcting codes. It is well known that the parity check matrix of a linear code with distance $k+1$ is $k$-wise independent. Therefore, one can define a $k$-wise independent data structure problem as the problem of multiplying an input vector $x\in\F^n$ by a fixed parity check matrix $M\in\F^{n\times m}$ of a code with a large distance. In particular, for fields of size $|\F|>m$, one can achieve $n$-wise independence by taking $M$ as the Vandermonde matrix. For smaller fields, one can take rate-optimal linear codes~\cite{macwilliams1977theory} and achieve $\Omega(n)$-wise independence for $m=O(n)$ and $\Omega(n/\log_{|\F|}(n))$-wise independence for every $m=\poly(n)$, which is tight~\cite{CGHFRS85}.

We remark that the result of \cref{thm:main_ds} applies to non-linear $k$-wise independent functions as well, and in fact it can be generalized to almost $k$-wise independent functions recovering the class of functions for which \cite{Siegel04} proved the cell sampling lower bound. For ease of exposition, in the proof below we show a lower bound for data structures computing $k=\Theta(n/\log(n))$-wise independent functions.

\mainthmds*
\begin{proof}
Consider a data structure for a $k=\Theta(n/\log(n))$-wise independent problem. For such a problem, in order to answer any $k$-tuple of queries, one needs to read at least $k$ memory cells. Indeed, every $k$-tuple of outputs of a $k$-wise independent function must take $|F|^{k}$ distinct values, and if it depended on $k-1$ memory cells it could only take at most $|F|^{k-1}$ distinct values.

For $t=2$, we construct a multigraph with $s$ vertices corresponding to the memory cells of the data structure, and $m$ edges, each corresponding to the pair of memory cells read for a query.
Let $k>8\log(s)$ and $\eps=\frac{16\log(s)}{k}$. If $m\geq s(1+\eps)$, then by the first part of \cref{thm:main_graph} we have a set of $k$ queries that depends on $k-1$ memory cells. Therefore, any data structure where in order to answer any $k$-tuple of queries, one needs to read at least $k$ memory cells, must satisfy $s\geq m/(1+\eps)\geq m(1-\eps)=m-\frac{16m\log(s)}{k}$. By plugging $k=\Theta(n/\log(n))$, we obtain the desired lower bound of $s\geq m-\widetilde{O}(m/n)$.

For $t\geq 3$, we construct a $t$-uniform hypergraph on $s$ vertices, where the vertices correspond to the memory cells of the data structure,
and $m$ hyperedges correspond to the $t$-tuples of memory cells read for each query. Let $k\geq 2^{t+2}\log(s)$. If $m\geq  3s\left(\frac{2^{t+3}\cdot s \cdot \log(s)}{k}\right)^{t-2}$, then 
by the second part of \cref{thm:main_graph}, there exists a set of $k+1$ queries that can be answered by $k$ memory cells. Therefore, every data structure that does not have such a $(k+1)$-tuple of queries must satisfy $s \geq \left(\frac{m}{3}\right)^{1/(t-1)}\cdot\left(\frac{k}{2^{t+3}\log(s)}\right)^{(t-2)/(t-1)}$. Setting $k=\Theta(n/\log(n))$ leads to the bound of $s\geq \Omega\left( n\cdot\left(\frac{m}{n}\right)^{1/(t-1)}\cdot\frac{1}{2^t\log(n)\log(m)} \right)$.
\end{proof}

\subsection*{Acknowledgments}
We would like to thank the anonymous reviewers whose detailed comments significantly helped us to improve the presentation of this result.
\bibliographystyle{alpha}
\bibliography{refs}
\end{document}